\documentclass[12pt,a4paper,normalheadings,headsepline,headinclude,bibtotoc,fleqn,review]{article}
\usepackage[latin1]{inputenc}
\usepackage{amsmath}
\usepackage{booktabs}
\usepackage{array}
\usepackage{amsthm}
\usepackage{dcolumn}
\usepackage{endnotes}
\usepackage{units}
\usepackage{marvosym}
\usepackage[lines=10]{geometry}
\usepackage{longtable}
\usepackage{rotating}
\usepackage{expdlist}
\usepackage{geometry}
\usepackage{floatrow}
\usepackage{pgfplots}
\usepackage{tikz}
\usetikzlibrary{shapes,snakes}
\pgfmathdeclarefunction{gauss}{3}{
	\pgfmathparse{1/(#3*sqrt(2*pi))*exp(-((#1-#2)^2)/(2*#3^2))}%
}
\usetikzlibrary{decorations.pathreplacing,calligraphy,backgrounds}
\pgfplotsset{compat=1.16}

\pgfmathsetmacro\valueA{gauss(2.25, 2.25, 0.7)} 

\usepackage{wrapfig}
\usepackage{expdlist}
\usepackage{amssymb}
\usepackage{gloss}
\usepackage{nomencl}
\usepackage{natbib}
\newtheorem{rem}{Remark}

\newtheorem{prop}{Proposition}
\newtheorem{coro}{Corollary}
\newtheorem{definition}{Definition}

\makegloss

\renewcommand{\thesection}{\arabic{section}}

\usepackage{fancyhdr} 
\begin{document}

\begin{center}\huge{Equilibrium Selection in Coordination Games with Planned Actions and Scouting}\footnote{I am grateful to Ariel Rubinstein and an anonymous associate editor for several very helpful suggestions, including using the preparation-and-scouting protocol introduced in \citet{kuhle2024gamesplannedactionsscouting} to study equilibrium selection in coordination games. I also thank Dominik Grafenhofer, Manuel Mago, Gregor von Schweinitz and an anonymous referee from a different journal for comments and literature suggestions. \\ \textbf{Declarations:} This research received no external funding, does not use any data, and there are no competing interests.}
\end{center}

\begin{center} \textit{Wolfgang Kuhle}\\ \textit{Corvinus University of Budapest, Hungary, E-mail wkuhle@gmx.de\\ MEA, Max Planck Institute for Social Law and Social Policy, Munich, Germany}\end{center}

\noindent\emph{\textbf{Abstract:} We study coordination games in which every action requires planning and preparation. Before players act, they can revise their plans based on partially revealing information about their adversary's preparations. Precise information enables agents to screen for cooperation, selecting the payoff-dominant equilibrium either via small exogenous trembles or via payoff uncertainty. Across scenarios, we emphasize that decomposing an action into (i) preparation and (ii) execution allows us to analyze simultaneous-move games where players partially observe each other's contemporaneous actions.}\\
\textbf{Keywords: Planned Actions, Conjectural Equilibrium, Coordination, Scouting, Stag Hunt}\\
\textbf{JEL: D82, D83}

\section{Introduction}\label{Introduction}

One important clue to players' future actions is their preparations. Such preparations are visible when crowds of depositors wait for a bank to open or when soldiers monitor allied troop movements before a coordinated attack. Likewise, in team efforts such as stag hunts, players can observe whether their counterparts are equipped to hunt rabbits or stags.

To incorporate the preparations that players undertake before they act, the present paper studies games where play involves three stages. In the first stage, both players choose which action to prepare. In the second stage, both players receive noisy information about their adversary's preparations. Finally, given the information about their adversary's planned action and their own planned action, players choose whether to execute or revise their plans. Once neither player chooses to make further revisions, plans are executed and payoffs materialize.

The present paper focuses on stag-hunt-type coordination games. Within the stag hunt fable, the three stage model can be interpreted as (i) players prepare either rabbit nets or stag-hunting gear; (ii) they inspect each other's equipment from a distance; and (iii) a player who detects a mismatch can deliberately make noise to spook rabbits and stags alike, aborting the hunt and forcing a reset to the preparation stage.

\textit{Related literature:} The present paper relates to two strands of the literature: (i) games with asymmetric information and (ii) conjectural equilibria.\footnote{See, e.g., \citet{Hah77}, \citet{Rub94}, \citet{Bat97}, and \citet{Ang06}. The literature on learning in dynamic games \citep[e.g.,][]{Cha99, Fra12} similarly emphasizes that players can only learn from each other's past actions.} Among others, \citet{Ang06} (p. 1730) argue that it is ``naive" to assume that players can observe each other's actions in simultaneous-move games since an action must have already occurred to be observable. To circumvent this issue, \citet{Rub94} and \citet{grafenhofer2022observing} restrict attention to ``steady states" in which players observe and repeat the actions that other players took in the past. 

Our model provides an alternative to the steady-state frameworks of \citet{Rub94} and \citet{grafenhofer2022observing} by decomposing actions into preparation and execution. Because players monitor each others' preparations, they receive partially revealing signals about their opponent's impending moves. This sequential structure naturally resolves the ``chicken-and-egg'' paradox between signals and actions that arises in simultaneous-move games where players observe each other's contemporaneous actions.

The present paper contributes to the literature on how asymmetric information over fundamentals and different forms of communication affect equilibrium play \citet{Rub89, Car93, Bin01, Izm10, Ste11, bachi2013betrayal, Ber16, Kuh15, Gra16, grafenhofer2022observing}. These articles emphasize that players use (i) their private information and (ii) knowledge of the equilibrium to \textit{indirectly} infer each other's actions. Taking this perspective, the present paper introduces noisy signals that inform players \textit{directly} of each other's contemporaneous equilibrium actions.

Our model assumes that plans are reversible, such that players can actively screen for cooperation. If scouting is sufficiently precise, the low-payoff equilibrium becomes unstable to screening deviations. Consequently, even small amounts of payoff uncertainty, or trembling hand play, which induce players to occasionally prepare the cooperative action, uniquely selects the payoff-dominant equilibrium via screening. Equilibrium selection within our model differs from global games (e.g., \citet{CarlssonvanDammeStaghunt, Car93}), where players rely on private information about payoffs to indirectly infer opponent behavior. In turn, global games predict that risk-dominant rather than payoff-dominant equilibria are being played. Likewise, our planning-and-scouting protocol, which selects a unique equilibrium, diverges from the folk theorems in the costly-communication model of \citet{bachi2013betrayal}, where players deceive each other regarding their intended course of action, thereby expanding the equilibrium set.

\textit{Organization:} Section \ref{Primitive Stag Hunt Game} introduces the baseline stag hunt game. Section \ref{Planned Actions and Scouting} embeds this game within our preparation-and-scouting protocol. Section \ref{Pure Strategy Equilibria with Planned Actions} establishes the conditions under which the standard pure-strategy equilibria survive. Section \ref{Marginalizing Inferior Equilibria} demonstrates that precise scouting renders the low-payoff equilibrium unstable against small perturbations. Sections \ref{Section Basins} and \ref{Section Unilateral Execution} provide diagrammatic intuition for this selection mechanism and extend the analysis to a setting where players can unilaterally force the execution of plans. Section \ref{Discussion} concludes.     

\section{Stag Hunt Game}\label{Primitive Stag Hunt Game}
In this section we recall the payoffs and equilibria of the textbook Stag Hunt game.
\begin{table}[h!]
\centering
\renewcommand{\arraystretch}{1.5}
\begin{tabular}{c|c|c|}
\multicolumn{1}{c}{} & \multicolumn{1}{c}{Stag $A$} & \multicolumn{1}{c}{Hare $B$} \\ \cline{2-3}
Stag $A$ & $a, a$ & $c, d$ \\ \cline{2-3}
Hare $B$ & $d, c$ & $b, b$ \\ \cline{2-3}
\end{tabular}
\caption{General Stag Hunt Matrix with Actions $A$ and $B$ and payoffs $a > d > b > c>0$.}
\end{table}

\begin{prop}\label{Prposition 1 Primitive Stag Hunt}
There exist two pure strategy equilibria: $(A, A)$ is the payoff-dominant equilibrium and  $(B, B)$ is the safe low-payoff equilibrium. There also exists a symmetric mixed-strategy equilibrium, where each player chooses action $A$ with probability $\alpha^* = \frac{b - c}{(a - d) + (b - c)}$ and action $B$ with probability $1 - \alpha^*$.
\end{prop}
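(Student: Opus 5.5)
The plan is to verify the three claims directly from the payoff matrix, using only the ordering $a > d > b > c > 0$. The game is symmetric, so it suffices to compute the row player's best responses and then invoke symmetry for the column player.

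\emph{Pure strategies.} First I will check that $(A,A)$ is a Nash equilibrium. Against $A$, the payoff from $A$ is $a$ and from $B$ is $d$, and $a>d$, so $A$ is a strict best response. Next, $(B,B)$ is an equilibrium: against $B$, action $B$ gives $b$ while $A$ gives $c$, and $b>c$ makes $B$ a strict best response. The mismatched profiles $(A,B)$ and $(B,A)$ are not equilibria, since the player choosing $A$ against $B$ gains by switching ($b>c$). Hence these are exactly the two pure equilibria. Payoff dominance follows from $a>b$, which is implied by $a>d>b$. The word ``safe'' for $(B,B)$ reflects that $B$ guarantees at least $b>c$ whatever the opponent does.

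\emph{Mixed strategy.} Suppose the opponent plays $A$ with probability $\alpha$. The row player's expected payoffs are
\[
U(A)=\alpha a+(1-\alpha)c, \qquad U(B)=\alpha d+(1-\alpha) b.
\]
Indifference $U(A)=U(B)$ gives $\alpha\bigl[(a-d)+(b-c)\bigr]=b-c$, so $\alpha=\alpha^*$ as stated. Because $a-d>0$ and $b-c>0$, the denominator is positive and $\alpha^*\in(0,1)$, so $\alpha^*$ is a genuine interior probability. By symmetry the column player is made indifferent by the same $\alpha^*$, and mutual indifference means each player's mixture is a best response. The pair $(\alpha^*,\alpha^*)$ is therefore a symmetric mixed equilibrium.

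There is no real obstacle here. The only point that needs care is checking that $\alpha^*$ lies strictly between $0$ and $1$, which depends on the signs $a>d$ and $b>c$. I would also remark that $U(A)-U(B)$ is strictly increasing in $\alpha$, so $\alpha^*$ is the unique interior indifference point. This shows it is the only symmetric completely mixed equilibrium, although the proposition only asserts existence.
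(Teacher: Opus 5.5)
Your verification is correct. The best-response checks for $(A,A)$ and $(B,B)$, the indifference condition giving $\alpha^*$, and the use of $a>d$ and $b>c$ to place $\alpha^*$ strictly in $(0,1)$ are exactly what is needed. The paper gives no proof and simply recalls the result as the textbook Stag Hunt benchmark, so your argument is the standard one that fills that gap.
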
 


\section{Planned Actions and Scouting}\label{Planned Actions and Scouting}
Once actions are decomposed into preparation and execution, play unfolds according to the following protocol:
\begin{enumerate}
   \item \textit{Planning/Preparation Stage:} Players $i=1,2$ choose whether to prepare action $A_i$ or $B_i$. Let $\alpha_i$ denote the probability that player $i$ prepares action $A_i$ and $1-\alpha_i$ the probability of preparing $B_i$.

    \item \textit{Intelligence and Scouting Stage:} After plans are chosen, each player receives a noisy signal about the opponent's planned action. Player 1 receives signal $\hat{A}_1$ or $\hat{B}_1$ about Player 2's plan, which is correct with probability $p > 1/2$:
    \[
    P(\hat{A}_1 \mid A_2) = P(\hat{B}_1 \mid B_2) = p, \quad P(\hat{B}_1 \mid A_2) = P(\hat{A}_1 \mid B_2) = 1-p
    \]
    Similarly, Player 2 receives signals $\hat{A}_2$ or $\hat{B}_2$ about Player 1's plan, which are correct with probability $q > 1/2$:
    \[
    P(\hat{A}_2 \mid A_1) = P(\hat{B}_2 \mid B_1) = q, \quad P(\hat{B}_2 \mid A_1) = P(\hat{A}_2 \mid B_1) = 1-q
    \]

    \item \textit{Execution/Revision of Plans:} Given their own planned action and the received signal, each player decides whether to execute or revise their plan. If at least one player chooses to revise, the game returns to the Planning/Preparation Stage. If neither player wishes to revise further, the planned actions are executed and payoffs are realized.\\
    Players' time preferences are described by $\delta\in(0,1)$. That is, when players decide whether to revise or execute their plans, they compare the expected payoff from executing their plans now to the $\delta$-discounted expected payoff that a revision brings.
\end{enumerate}
To study the above planning and scouting protocol in the context of a Stag Hunt we use:

\begin{definition} A strategy for Player $i \in \{1,2\}$ is a pair $s_i = (\alpha_i, \chi_i)$, where:
\begin{enumerate}
    \item $\alpha_i \in [0,1]$ is the probability of preparing action $A_i$ in Stage 1.
    \item $\chi_i: \{A_i, B_i\} \times \{\hat{A}_i, \hat{B}_i\} \to \{0,1\}$ specifies the decision to revise $\chi_i=0$ or execute $\chi_i=1$ in Stage 3 given preparation $x_i \in \{A_i, B_i\}$ and observed signal $\hat{y}_i \in \{\hat{A}_i, \hat{B}_i\}$.
\end{enumerate}
\end{definition}

\begin{definition} A Perfect Bayesian Equilibrium (PBE) for the game with planned actions and scouting consists of a strategy profile $s^* = (s_1^*, s_2^*)$ and a belief system $\mu = (\mu_1, \mu_2)$ such that:
\begin{enumerate}
    \item \textbf{Belief Consistency:} For each Player $i \in \{1,2\}$, upon choosing preparation $x_i \in \{A_i, B_i\}$ and observing signal $\hat{y}_i \in \{\hat{A}_i, \hat{B}_i\}$, the posterior belief $\mu_i(x_{-i} \mid x_i, \hat{y}_i)$ regarding the opponent's prepared action $x_{-i}$ is derived from the opponent's planning strategy $\alpha_{-i}^*$ and the signal precisions ($p$ or $q$) using Bayes' rule.
    \item \textbf{Sequential Rationality in Stage 3:} For any preparation $x_i$ and observed signal $\hat{y}_i$, the execution decision $\chi_i^*(x_i, \hat{y}_i)$ maximizes Player $i$'s expected payoff (either immediate execution or the discounted continuation value of revising), given the belief $\mu_i(\cdot \mid x_i, \hat{y}_i)$ and the opponent's strategy $s_{-i}^*$.
    \item \textbf{Optimality in Stage 1:} The preparation probability $\alpha_i^*$ maximizes Player $i$'s expected ex-ante payoff, anticipating the subsequent signal distribution and the rationally optimal Stage-3 execution choices of both players.
\end{enumerate}
\end{definition}

\textit{Aggregation of Choices:} The Stage-3 outcome of the game is determined by the aggregation of choices. The current plans $(x_1, x_2)$ are executed if $\min\{\chi_1, \chi_2\} = 1$. Otherwise, the game returns to Stage 1.

\section{Pure Strategy Equilibria}\label{Pure Strategy Equilibria with Planned Actions}

\noindent Before we study the conditions that select the payoff dominant equilibrium, we establish conditions under which the introduction of noisy intelligence preserves the pure-strategy benchmarks of the standard game.

\begin{prop}\label{Proposition Pure Equilibria}
The modified Stag Hunt game with scouting admits two pure-strategy Perfect Bayesian Equilibria (PBE):
\begin{enumerate}
    \item \textit{A payoff-dominant equilibrium where both players always plan action $A$ and execute unconditionally, yielding an equilibrium payoff vector $(a,a)$.}
    \item \textit{A low-payoff equilibrium where both players always plan action $B$ and execute unconditionally, yielding an equilibrium payoff vector $(b,b)$.}
\end{enumerate}
Players never revise their plans in the pure-strategy equilibria.
\end{prop}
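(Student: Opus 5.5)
Both parts of the statement can be verified by direct construction. For each candidate profile we fix the strategy, compute the beliefs it induces through Bayes' rule, and check the two optimality conditions in the PBE definition. First, Stage-3 sequential rationality at every information set $(x_i,\hat y_i)$, including those off the equilibrium path. Second, Stage-1 optimality of $\alpha_i^*$. Throughout we use the aggregation rule $\min\{\chi_1,\chi_2\}=1$ and the fact that the equilibrium continuation value after a revision equals the stationary equilibrium payoff $V$.

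\emph{Payoff-dominant profile.} Take $\alpha_1^*=\alpha_2^*=1$ and $\chi_i\equiv 1$.

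Beliefs: since $\alpha_{-i}^*=1$ and $p,q<1$, both signals have positive probability. Bayes' rule therefore gives $\mu_i(A_{-i}\mid x_i,\hat y_i)=1$ for every signal, so the signal carries no information about the plan.

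Stage 3: the continuation value of revising is $\delta V=\delta a$, because after a revision both players restart and play $(A,A)$. A player who prepared $A_i$ gets $a>\delta a$ by executing, so executing is optimal. A player off path who prepared $B_i$ gets $d$ by executing versus $\delta a$ by revising. Since $\chi_i\equiv 1$ is prescribed, we need $d\ge\delta a$. If instead $d<\delta a$, we modify the prescription to $\chi_i(B_i,\cdot)=0$. This changes nothing on the path, so the Stage-1 argument below is unaffected.

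Stage 1: preparing $A_i$ yields $a$. Deviating to $B_i$ yields at most $\max\{d,\delta a\}<a$. Hence $\alpha_i^*=1$ is optimal.

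\emph{Low-payoff profile.} Take $\alpha_i^*=0$ and $\chi_i\equiv1$.

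Beliefs: $\mu_i(B_{-i}\mid\cdot)=1$ after either signal.

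Stage 3: a player with plan $B_i$ executes for $b$ against $\delta b$ from revising, and $b>\delta b$. Off path, a player with plan $A_i$ gets $c$ from executing against $\delta b$ from revising, and chooses whichever is larger.

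Stage 1: the key point is that revisions cannot help a deviator. The opponent executes unconditionally, so a deviator to $A_i$ receives either $c$ (execute) or $\delta b$ (revise and return to $(B,B)$). Both are $<b$ because $c<b$ and $\delta<1$. So no profitable deviation exists.

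\emph{No revisions on path.} On the path, execution occurs in the first round in both equilibria, so no player ever revises.

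\emph{Main obstacle.} The step I expect to be delicate is the off-path Stage-3 choice, specifically in the $(A,A)$ equilibrium. There, unconditional execution after preparing $B$ requires $d\ge\delta a$. If that fails, ``execute unconditionally'' must be read as ``on the path.'' The claim would then be stated with the off-path prescription $\chi_i(B_i,\cdot)=0$, which leaves the equilibrium outcome and payoffs $(a,a)$ intact. I would first attempt the literal statement. Where necessary, I would add this remark rather than impose the parameter restriction $d\ge\delta a$.
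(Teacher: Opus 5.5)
Your proposal is correct and follows essentially the same route as the paper. Like the paper, you fix each profile, note that Bayes' rule gives degenerate beliefs, compare execution with the $\delta$-discounted equilibrium continuation, and rule out Stage-1 deviations; your bound $\max\{d,\delta a\}<a$ is slightly more complete than the paper's check, which only considers the deviator executing for $d$. Your flag about the off-path information set $(B_i,\cdot)$ is a genuine point the paper's proof passes over: executing there is sequentially rational only if $d\ge\delta a$, and the paper itself has players revise at such mismatches in Step 3 of Proposition~\ref{Prop Unilateral}. The same issue arises at $(A_i,\cdot)$ in the $(B,B)$ profile whenever $\delta b>c$; your ``whichever is larger'' clause already handles it, but you should state it explicitly as a departure from $\chi_i\equiv 1$ rather than treating the $(A,A)$ case as the only delicate one.
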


\begin{proof}
We verify the sequential rationality of both profiles by constructing their respective belief systems and checking for profitable deviations at each stage.

\begin{itemize}
    \item\textit{The $(A,A)$ Equilibrium:} Let both players set $\alpha_1 = \alpha_2 = 1$ in Stage 1. In turn, the signals received in Stage 2 carry no informational update; the posterior beliefs are fixed at $\mu_i(A_{-i} \mid A_i, \hat{A}_i) = \mu_i(A_{-i} \mid A_i, \hat{B}_i) = 1$ for both players $i \in \{1,2\}$. Given these beliefs, the expected utility of choosing \textit{Execute} in Stage 3 is $a$. The expected value of a revision is therefore $V_i = \delta a < a$. Hence, neither player exercises the revision option.
    
    Possible deviations: Suppose Player 1 unilaterally deviates by planning action $B_1$. In Stage 3, Player 2, holding the belief that Player 1 plays $A$, he chooses to execute his plan $A_2$. If Player 1 also chooses execute, he then faces an execution payoff of $d$ since the realized profile is $(B,A)$. Because $a > d$, Player 1 cannot improve upon $a$ by deviating to $B_1$, confirming $(A,A)$ is a PBE.

   \item \textit{The $(B,B)$ Equilibrium:} Let both players set $\alpha_1 = \alpha_2 = 0$ in Stage 1. Posterior beliefs are $\mu_i(B_{-i} \mid B_i, \cdot) = 1$, yielding an execution payoff of $b$ and setting the continuation value to $V_i = \delta b$, while executing yields $b > V_i = \delta b$, so neither player revises.
    
    At Stage 1, suppose Player 1 deviates by planning action $A_1$, while Player 2 always plans and executes $B_2$. If $(A,B)$ is executed, Player 1 receives $c < b$. If the game is revised instead, Player 1 receives either $\delta c < b$ (if he persists in planning $A_1$, again facing Player 2's $B_2$) or $\delta b < b$ (if he switches back to the equilibrium strategy of planning $B_1$, thereby matching Player 2's $B_2$). So Player 1 has no incentive to deviate from planning and executing $B_1$.
\end{itemize} 
\end{proof}

\begin{rem}[Passive Off-Path Beliefs]
Sustaining $(B,B)$ as a PBE relies on assigning passive off-path beliefs: following an unexpected revision, Player 2 must believe Player 1 will revert to planning $B_1$. If Player 2 instead interpreted a costly revision, \textit{which should never happen in the $B,B$ equilibrium}, as a signal that Player 1 permanently switched to $A_1$, Player 2 would best-respond with $A_2$. This would make the deviation profitable and destabilize $(B,B)$ even before introducing the screening mechanism in Section \ref{Marginalizing Inferior Equilibria}. However, this signaling interpretation is fragile as nothing prevents Player 1 from reverting to $B_1$ post-revision.
\end{rem}

\section{Toppling Inferior Equilibria}\label{Marginalizing Inferior Equilibria}

In this section, we show that cheap, precise scouting destabilizes the safe $(B,B)$ equilibrium by allowing agents to screen for cooperative preparations. As a result, the low-payoff outcome unravels in the presence of small trembles or payoff perturbations.

\begin{prop}\label{Prop Instability time pref}
Consider the pure-strategy equilibrium $(B,B)$. Suppose Player 2 makes occasional errors, planning and unconditionally executing action $A_2$ with probability $\alpha_2 = \varepsilon \in (0,1)$. Then, for any fixed tremble $\varepsilon > 0$, there exist threshold values $p^*(\varepsilon) \in (1/2, 1)$ and $\delta^*(\varepsilon) \in [0, 1)$ such that if $p > p^*$ and $\delta > \delta^*$, it is profitable for Player 1 to deviate and to plan action $A_1$ in every period and execute if and only if the scouting signal reads $\hat{A}_1$.
\end{prop}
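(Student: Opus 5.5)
Since Player 2's behaviour is fixed by the tremble, I will treat Player 1's deviation as a stationary stopping problem and compute its value in closed form. Each period, Player 2 independently prepares $A_2$ with probability $\varepsilon$ and $B_2$ with probability $1-\varepsilon$, and always executes. Whether the plans are carried out therefore depends only on Player 1's choice. Under the deviation, Player 1 prepares $A_1$ and executes exactly when the signal reads $\hat{A}_1$. In each round the outcome is:
\begin{itemize}
\item $(A,A)$ is executed with probability $\varepsilon p$, giving payoff $a$;
\item $(A,B)$ is executed with probability $(1-\varepsilon)(1-p)$, giving payoff $c$;
\item the game is reset with the remaining probability $r=\varepsilon(1-p)+(1-\varepsilon)p$.
\end{itemize}
Stationarity gives $V=\varepsilon p a+(1-\varepsilon)(1-p)c+\delta r V$. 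Hence
\[
V(p,\delta,\varepsilon)=\frac{\varepsilon p a+(1-\varepsilon)(1-p)c}{1-\delta\bigl(\varepsilon(1-p)+(1-\varepsilon)p\bigr)} .
\]
The comparison payoff is what Player 1 earns by following the equilibrium prescription of always planning and executing $B_1$. Against the trembling Player 2 this is $W=\varepsilon d+(1-\varepsilon)b$. (If we instead compare with the unperturbed $b$, the argument is the same, only easier.) The task is to show that $V>W$ on a region of the form $p>p^*$, $\delta>\delta^*$.

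\emph{Limit case.} At $p=1$ and $\delta\to 1$ the formula gives $V\to \varepsilon a/(1-(1-\varepsilon))=a$, and $a>\max\{d,b\}\ge W$. So the strict inequality holds at the corner $(p,\delta)=(1,1)$, where $p=1$ is a limit point. Because $V$ is continuous in $(p,\delta)$ on $[1/2,1]\times[0,1]$ (the denominator is at least $1-\delta(1-\varepsilon)>0$), the inequality extends to a neighbourhood of that corner.

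\emph{Product-form thresholds.} A neighbourhood of the corner does not by itself yield separate thresholds, so I will use monotonicity to get them.
\begin{itemize}
\item $V$ is increasing in $\delta$ whenever its numerator is positive, which always holds.
\item To handle $p$, I will check $\partial V/\partial p$ directly, or more simply bound $V$ below. The numerator is at least $\varepsilon p a$. The denominator is at most $1-\delta\varepsilon(1-p)-\delta(1-\varepsilon)p$.
\end{itemize}
Concretely, I will first fix $\delta^*$ so that $V(1,\delta^*)=\varepsilon a/(1-\delta^*(1-\varepsilon))>W$. This is possible because $V(1,\delta)$ rises continuously to $a$, and the threshold can be taken to be $\delta^*=0$ if already $\varepsilon a>W$. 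Then, for every $\delta\ge\delta^*$, continuity of $V(\cdot,\delta)$ at $p=1$ provides a threshold $p$. To make $p^*$ uniform in $\delta\in[\delta^*,1]$, I will use compactness of $[\delta^*,1]$ together with joint continuity of $V-W$, since $V(1,\delta)-W\ge V(1,\delta^*)-W>0$ on that set. I could also give an explicit bound, e.g. showing that $|V(p,\delta)-V(1,\delta)|\le K(1-p)/\varepsilon$ using the denominator lower bound $\varepsilon\cdot$const near $p=1$, which would yield an explicit $p^*(\varepsilon)<1$.

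The main obstacle is this uniformity step, i.e. obtaining thresholds that work jointly as $p>p^*$ and $\delta>\delta^*$, rather than only a neighbourhood of the corner. I would first try the explicit Lipschitz-type bound in $1-p$. Its constant blows up like $1/\varepsilon$, which is consistent with the $\varepsilon$-dependence of $p^*(\varepsilon)$ in the statement.
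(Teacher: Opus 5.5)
Your argument is correct and follows the paper's route: the same closed-form stationary value $V$, evaluated at $p=1$, pushed to $a$ by letting $\delta\to 1$, and extended to $p<1$ by continuity. Your benchmark $W=\varepsilon d+(1-\varepsilon)b$ is the more accurate payoff of the equilibrium strategy against the trembling Player 2; the paper compares with $b<W$, so your inequality implies its. The uniformity step you flag is easier than you think: since $V(p,\delta)\ge V(p,\delta^*)$ for $\delta\ge\delta^*$, continuity of $V(\cdot,\delta^*)$ at $p=1$ alone gives $p^*$, with no compactness needed. One small slip: your bound $1-\delta r\ge 1-\delta(1-\varepsilon)$ fails for $\varepsilon>1/2$, but $1-\delta r\ge \varepsilon p+(1-\varepsilon)(1-p)\ge\min\{\varepsilon,1-\varepsilon\}>0$ suffices for continuity.

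The paper does one thing you omit. It checks that the screening rule is itself sequentially rational, $E[U_1\mid\hat A_1]>\delta V_1>E[U_1\mid\hat B_1]$, which adds the requirement $\delta>c/(\varepsilon a+c(1-\varepsilon))$ to $\delta^*$. This is not needed for the profitability claim as stated, but your thresholds do not deliver it; for example, you allow $\delta^*=0$ when $\varepsilon a>W$.
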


\begin{proof}
Consider an off-path deviation where Player 1 plans $A_1$ in every period and choosing to execute in Stage 3 if and only if the observed signal is $\hat{A}_1$. 

To establish the ex-ante as well as the sequential rationality of Player 1's deviation, we must evaluate the ex-interim beliefs and expected payoffs after the scouting signal is received but before the execution decision is made. 

Using Bayes' rule, Player 1's ex-interim belief that Player 2 has prepared $A_2$ conditional on observing $\hat{A}_1$ is:
$$ \mu_1(A_2 \mid \hat{A}_1) = \frac{p \varepsilon}{p \varepsilon + (1-p)(1-\varepsilon)} $$ 

Conversely, upon observing $\hat{B}_1$, Player 1's ex-interim belief is:
$$ \mu_1(A_2 \mid \hat{B}_1) = \frac{(1-p) \varepsilon}{(1-p) \varepsilon + p(1-\varepsilon)} $$ 

Given these beliefs, Player 1's expected ex-interim payoff for choosing to execute is:
$$ E[U_1 \mid \hat{A}_1] = \mu_1(A_2 \mid \hat{A}_1) a + (1 - \mu_1(A_2 \mid \hat{A}_1)) c $$
$$ E[U_1 \mid \hat{B}_1] = \mu_1(A_2 \mid \hat{B}_1) a + (1 - \mu_1(A_2 \mid \hat{B}_1)) c $$

Given Player 1's screening strategy, taking into account that Player 2 never triggers a revision in the $(B,B)$ equilibrium, execution occurs in any given iteration with the probability of observing $\hat{A}_1$:
$$ \gamma = P(\hat{A}_1) = p\varepsilon + (1-p)(1-\varepsilon) $$

The ex-ante discounted present value $V_1$ of this infinite-horizon screening strategy is thus:
$$ V_1 = \sum_{t=0}^{\infty} \delta^t (1-\gamma)^t \gamma E[U_1 \mid \hat{A}_1] = \frac{p \varepsilon a + (1-p)(1-\varepsilon) c}{1 - \delta(1-\gamma)} $$

\textit{Sequential Rationality:} For the screening strategy to be sequentially rational at the ex-interim stage, Player 1 must prefer to execute upon observing $\hat{A}_1$ and revise upon observing $\hat{B}_1$. This requires comparing the expected immediate execution payoff to the discounted continuation value of revising, $\delta V_1$:
$$ E[U_1 \mid \hat{A}_1] > \delta V_1 > E[U_1 \mid \hat{B}_1] $$

Taking the limit ($p \to 1$):
$$ \lim_{p \to 1} \mu_1(A_2 \mid \hat{A}_1) = 1 \implies \lim_{p \to 1} E[U_1 \mid \hat{A}_1] = a $$
$$ \lim_{p \to 1} \mu_1(A_2 \mid \hat{B}_1) = 0 \implies \lim_{p \to 1} E[U_1 \mid \hat{B}_1] = c $$
$$ \lim_{p \to 1} \gamma = \varepsilon $$
$$ \lim_{p \to 1} V_1 = \frac{\varepsilon a}{1 - \delta(1-\varepsilon)} $$ 

Because $\varepsilon \in (0,1)$ and $\delta \in (0,1)$, it holds that $\frac{\varepsilon}{1 - \delta(1-\varepsilon)} < 1$, ensuring $a > V_1 > \delta V_1$. Furthermore, as $\delta \to 1$, we have $\delta V_1 \to V_1 \to a$, which exceeds $c$. Thus, for sufficiently high $p$ and $\delta$, the sequential rationality condition $E[U_1 \mid \hat{A}_1] > \delta V_1 > E[U_1 \mid \hat{B}_1]$ is satisfied.

\textit{Ex-ante Rationality:} Player 1 \textit{ex-ante} prefers this screening strategy over playing the benchmark equilibrium strategy (planning $B_1$ and executing unconditionally to receive $b$) if and only if the ex-ante continuation value exceeds $b$. Evaluating this inequality under $p \to 1$:
$$ \frac{\varepsilon a}{1 - \delta(1-\varepsilon)} > b \iff \varepsilon a > b(1 - \delta + \delta \varepsilon) \iff \delta > \frac{b - \varepsilon a}{b(1 - \varepsilon)} $$

Because $a > b$, it follows that $b - \varepsilon a < b(1-\varepsilon)$ for any $\varepsilon > 0$, guaranteeing that $\frac{b - \varepsilon a}{b(1-\varepsilon)} < 1$.

The deviation must also satisfy sequential rationality at the ex-interim stage, i.e. $\delta V_1 > c$. Substituting $V_1 = \frac{\varepsilon a}{1-\delta(1-\varepsilon)}$, this condition reduces to
$$ \delta V_1 > c \iff \delta > \frac{c}{\varepsilon a + c(1-\varepsilon)} $$
Comparing the two thresholds shows that the ex-ante condition binds for $\varepsilon \le \frac{b-c}{a-c}$, while the sequential-rationality condition binds for $\varepsilon > \frac{b-c}{a-c}$; in particular, as $\varepsilon \to 1$ the ex-ante threshold vanishes even though sequential rationality still requires $\delta > c/a$. We therefore define the economically relevant threshold as the maximum of both requirements:
$$ \delta^*(\varepsilon) \equiv \max\left\{\ \frac{b - \varepsilon a}{b(1 - \varepsilon)},\ \frac{c}{\varepsilon a + c(1-\varepsilon)}\right\} \in (0,1) $$

Continuity of the expected utilities in $p$ implies that for any $\varepsilon > 0$, there exist threshold values $p^* < 1$ and $\delta^* < 1$ such that $V_1 > b$ and the ex-interim execution rules are sequentially rational for all $p > p^*$ and $\delta > \delta^*$. Thus, Player 1 prefers to deviate to screening, rendering $(B,B)$ unstable.
\end{proof}

Proposition \ref{Prop Instability time pref} indicates that whenever the cost of revising plans is low ($\delta \to 1$) and scouting precision is high ($p \to 1$), players have a strong incentive to screen for favorable engagements even when off-equilibrium cooperative play is extremely rare. This screening argument suggests the following:

\begin{coro}\label{Coro Complete Strategy}
Consider any mixed strategy equilibrium, where players $i$ execute action $A$ with
ex-ante probabilities $\varepsilon_i>0$. In the limit as information precision
$p, q \to 1$ and revision costs vanish $\delta \to 1$, the execution probabilities
$\varepsilon_i$ either go to zero or the mixed equilibrium ceases to exist.
\end{coro}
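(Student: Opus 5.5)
The plan is to argue by contradiction, using the screening deviation from Proposition \ref{Prop Instability time pref} as the key tool. Suppose that along a sequence $(p_n,q_n,\delta_n)\to(1,1,1)$ there are mixed equilibria in which player $i$ plans $A$ with probability $\varepsilon_i^n$. Suppose further that these probabilities stay bounded away from zero, say $\varepsilon_i^n\ge\bar\varepsilon>0$ along a subsequence. We want to show that, for large $n$, such a profile cannot be an equilibrium.

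The first step uses the indifference requirement. A genuinely mixed equilibrium requires player $-i$ to be indifferent in Stage 1 between planning $A$ and planning $B$. The value of this equilibrium to player $-i$ is therefore the value of planning $B$. Planning $B$ yields at most $\max\{d,b\}=d$ per execution, and $d<a$.

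The second step applies the argument of Proposition \ref{Prop Instability time pref}, with $\varepsilon$ replaced by $\varepsilon_i^n$. Player $-i$ deviates to: always plan $A$, and execute if and only if the signal reads $\hat A$. As $p,q\to1$, a signal $\hat A$ reveals $A_i$ almost surely. Execution then pays approximately $a$, and a revision is triggered with probability approximately $1-\varepsilon_i$. The value of this deviation converges to $\varepsilon_i a/(1-\delta(1-\varepsilon_i))$. As $\delta\to1$, this converges to $a$, uniformly in $\varepsilon_i\ge\bar\varepsilon$. The ex-interim conditions $E[U\mid\hat A]>\delta V>E[U\mid\hat B]$ are checked exactly as in the proof of Proposition \ref{Prop Instability time pref}. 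Those thresholds, $\delta^*(\varepsilon)$ and $p^*(\varepsilon)$, are monotone in $\varepsilon$, so they hold uniformly on $[\bar\varepsilon,1)$.

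The third step concludes. For large $n$, the screening deviation yields close to $a$, while the equilibrium value is at most about $d<a$. Planning $A$ with screening thus strictly beats planning $B$, so indifference fails. Hence the mixed profile is not an equilibrium, unless it degenerates toward pure $A$, which would contradict it being a proper mixed equilibrium. We conclude that either $\varepsilon_i^n\to0$ or mixed equilibria cease to exist.

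The main obstacle is Step 1. There the opponent's strategy is not pure $B$ with unconditional execution, as it was in Proposition \ref{Prop Instability time pref}. The opponent also revises conditionally on signals, so the revision probability and the continuation values depend on the execution rules of both players. My plan is to bound things crudely. Any revision triggered by player $i$ only raises the number of fresh draws, which can only help a screening player when $\delta\to1$. This allows us to lower-bound the deviation value by the same geometric-series formula, with $\gamma$ replaced by the probability of a mutually executed $(A,A)$ match. That probability remains at least about $\varepsilon_i\,\chi$, and I must verify that it stays bounded away from zero. If player $i$ refuses to execute $A$ after $\hat A$, then $A$ is never executed, which contradicts the assumption that player $i$ executes $A$ with ex-ante probability $\varepsilon_i>0$.
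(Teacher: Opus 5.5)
Your overall architecture works, and your Step 1 adds something the paper's proof leaves implicit. The paper only asserts that the screening deviation is ``profitable,'' tacitly importing the benchmark $b$ from Proposition~\ref{Prop Instability time pref}. In a mixed equilibrium the relevant benchmark is $V_{-i}$, and your indifference argument pins it down: by stationarity, planning $B$ yields a mixture of execution payoffs $\le d$ and the continuation $\delta V_{-i}$, so $V_{-i}\le d<a$. You can also drop the ex-interim checks imported from Proposition~\ref{Prop Instability time pref}. Their thresholds were derived against an opponent who executes unconditionally, so they do not transfer verbatim. More to the point, to refute an equilibrium it suffices that some strategy of $-i$ earns more than $V_{-i}$.

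The gap lies in exactly the step the paper's proof is about. You need a lower bound, uniform along your sequence, on the per-round probability $\gamma$ that the screener's $A$ is met by an executed $A_i$.

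\begin{itemize}
\item You set up the contradiction with $\varepsilon_i^n$ as player $i$'s \emph{planning} probability, but the corollary's $\varepsilon_i$ is the ex-ante \emph{execution} probability. A planning probability bounded below says nothing about $\chi_i(A_i,\hat A_i)$, which may tend to zero. Then $\gamma\approx\alpha_i\chi_i(A_i,\hat A_i)\to 0$, and the deviation value need not approach $a$.
\item Your repair, ``if player $i$ refuses to execute $A$ after $\hat A$, then $A$ is never executed,'' is only qualitative. It gives $\chi_i(A_i,\hat A_i)>0$, not a bound uniform in $n$.
\item It is also false as stated. Player $i$ could execute $A$ only after $\hat B_i$. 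Against a screener who always plans $A$, player $i$ sees $\hat A_i$ with probability tending to one, so those executions never reach the screener.
\end{itemize}

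The missing ingredient is the monotonicity $\chi_i(A_i,\hat A_i)\ge\chi_i(A_i,\hat B_i)$, which the paper derives from sequential rationality because $\hat A_i$ raises the posterior on $A_{-i}$. It gives $\varepsilon_i=\alpha_i[P(\hat A_i)\chi_i(A_i,\hat A_i)+P(\hat B_i)\chi_i(A_i,\hat B_i)]\le\alpha_i\chi_i(A_i,\hat A_i)$. Hence, once $\varepsilon_i^n\ge\bar\varepsilon$ refers to execution probabilities, $\gamma\ge pq\,\alpha_i\chi_i(A_i,\hat A_i)\ge pq\,\bar\varepsilon$. Let $\eta\le\max\{1-p,1-q\}$ bound the probability of an executed mismatch. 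Then $V^{dev}\ge\gamma a/(1-\delta+\delta\gamma+\delta\eta)\ge pq\bar\varepsilon a/(1-\delta+\delta pq\bar\varepsilon+\delta\eta)$, which tends to $a$ uniformly, and your Step 3 goes through.

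Relatedly, drop the claim that revisions triggered by player $i$ ``can only help'' the screener. Revisions after a matched $(A,A)$ plan merely delay $a$; what matters is the lower bound on $\gamma$.
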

\begin{proof}
Let us consider the ex-ante probability of player $i$ executing $A_i$:
\[
\varepsilon_i = \alpha_i \left[ P(\hat A_i)\chi_i(A_i,\hat A_i) + P(\hat B_i)\chi_i(A_i,\hat B_i) \right] \leq \alpha_i\leq1
\]
If a mixed equilibrium exists, there are two possibilities. First,
$\lim_{p,q,\delta\rightarrow 1}\varepsilon_i=0$, so the mixed equilibrium converges to
the $(B,B)$ equilibrium. Second, $\lim_{p,q,\delta\rightarrow 1}\varepsilon_i=\varepsilon^*>0$.
Sequential rationality implies $\chi_i(A_i,\hat A_i)\geq\chi_i(A_i,\hat B_i)$, so the
bracketed term above is bounded by $\chi_i(A_i,\hat A_i)$, giving
$\varepsilon_i \le \alpha_i\,\chi_i(A_i,\hat A_i) \le \chi_i(A_i,\hat A_i)$. Hence $\chi_i(A_i,\hat A_i)\ge\varepsilon_i\to\varepsilon^*>0$: Player $i$
executes $A_i$ upon observing $\hat A_i$ with probability bounded away from zero. This
creates the environment of Proposition~\ref{Prop Instability time pref}, with
tremble probability $\varepsilon^*$: planning $A_{-i}$ all of the time and executing
whenever $\hat A_{-i}$ is observed is a profitable deviation for Player $-i$, so the
mixed equilibrium collapses once $p,q,\delta\rightarrow 1$.
\end{proof}

Instead of occasional trembles, we may also assume that there is a small chance that payoffs change for at least one player, such that hunting stags becomes a dominant strategy. That is, in the context of the stag hunt narrative, there is the chance that a particularly good opportunity presents itself and a player can hunt a stag without help, yielding a payoff $c' > b$. Yet, mutual coordination remains optimal, meaning hunting with help would still be even better ($a > c'$).


\begin{coro}[Payoff Perturbations]\label{Coro Payoff Perturbations}
Suppose that after each revision of the game, and for each player there is an $\varepsilon>0$ probability that payoffs change, such that the payoff from $A$ against $B$ increases to $c'$, where $a>c'>b>0$. Suppose also that players observe changes in their own payoffs but cannot communicate them. Then the $(B,B)$ equilibrium unravels for sufficiently high $p$, $q$, and $\delta$, leaving $(A,A)$ as the surviving pure strategy equilibrium.
\end{coro}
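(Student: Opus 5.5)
The plan is to reduce the payoff-perturbation case to the tremble case of Proposition~\ref{Prop Instability time pref}. The perturbation will play the role of the exogenous tremble. After each revision, with probability $\varepsilon$ a player is of the ``perturbed'' type with payoff $c'>b$ from playing $A$ against $B$. For such a player, $A$ is strictly dominant in the stage game: against $A$ he gets $a>d$, and against $B$ he gets $c'>b$.

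\emph{Step 1: perturbed types play like trembles.} In any candidate equilibrium that is close to $(B,B)$, a perturbed player plans $A$. If he then executes, he obtains at least $c'>b$. By contrast, a revision yields at most $\delta$ times a continuation value, and near $(B,B)$ that continuation value is close to $b$ or to $c'$. Since he does not benefit from redrawing payoffs he already likes, I expect him to execute unconditionally, or at least upon $\hat A$. The point to check carefully is whether executing regardless of the signal is optimal: $c'$ against $B$ already beats $b$, so he has no reason to screen. He may still prefer to screen when $\delta\to1$, because screening can earn $a>c'$. Either way, the perturbed player executes $A$ upon observing $\hat A$ with a probability bounded away from zero, and that is all the tremble argument needs. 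From the viewpoint of an unperturbed opponent, the other player therefore prepares $A$ with probability at least $\varepsilon$. The opponent cannot observe this directly, because payoffs are private and cannot be communicated, but he sees a signal of precision $p$ or $q$.

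\emph{Step 2: apply the screening deviation.} Fix an unperturbed Player~1 facing an opponent who follows $(B,B)$ when unperturbed and plays $A$ when perturbed. Player~1 deviates by planning $A$ every period and executing only on $\hat A_1$. The Bayes computations and the value $V_1$ are exactly those in the proof of Proposition~\ref{Prop Instability time pref}, with tremble probability $\varepsilon$, with one modification. Each revision redraws payoff types, so the per-period probability that the opponent is of type $A$ is $\varepsilon$ independently across periods, as in the i.i.d.\ tremble. In addition, Player~1 may himself become perturbed, which only strengthens his incentive to play $A$. The limits $p\to1$ and $\delta\to1$ then give $V_1\to a>b$, and the ex-interim conditions $E[U_1\mid\hat A_1]>\delta V_1>E[U_1\mid\hat B_1]$ hold by continuity. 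Hence the deviation is profitable, and $(B,B)$ unravels for $p$, $q$, and $\delta$ large. The symmetric argument with $q$ covers Player~2.

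\emph{Step 3: $(A,A)$ survives.} I would rerun the argument for part 1 of Proposition~\ref{Proposition Pure Equilibria}. Perturbed types weakly prefer $A$, since $a>c'$ and $a>d$. Unperturbed types facing an opponent who always plans $A$ still get $a>\max\{d,\delta a\}$. So planning $A$ and executing unconditionally remains a PBE, with the perturbation irrelevant on path.

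The main obstacle is Step 1, because of a subtlety the tremble case avoids. In Proposition~\ref{Prop Instability time pref}, the revision decision of the trembling player was exogenous: he never revised. Here the perturbed opponent is strategic, and he might himself revise on $\hat B$ to screen for $a$. This would change the execution probability $\gamma$ and hence the value of Player~1's deviation. I would handle this by bounding $\gamma$ from below: whenever both players plan $A$ and the signals are correct, both execute, so joint $(A,A)$ execution still occurs with per-period probability close to $\varepsilon$ as $p,q\to1$. Consequently $V_1$ still converges to $\varepsilon a/(1-\delta(1-\varepsilon))$, which tends to $a$ as $\delta\to1$.
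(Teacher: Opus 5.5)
Your proposal is correct and follows essentially the same route as the paper. In both, the perturbed type plans $A$ by dominance and executes it upon $\hat{A}$ with probability bounded away from zero, which recreates the tremble environment of Proposition~\ref{Prop Instability time pref}, so the screening deviation topples $(B,B)$. Your lower bound on $\gamma$ for a strategic perturbed opponent and your explicit check that $(A,A)$ survives add details that the paper's short proof leaves implicit.
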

\begin{proof}
Given that payoffs, $a>d$ and $c'>b$, make $A$ strictly dominant, a perturbed type $i$ plans $A_i$ with $\alpha_i=1$. During the execution stage, the perturbed type will execute $A$ with a strictly positive ex-ante probability, even if this execution is conditional rather than unconditional. As established in the proof of Corollary \ref{Coro Complete Strategy}, an execution probability bounded away from zero generates the environment of Proposition \ref{Prop Instability time pref}. From the perspective of a normal type, the opponent plans and executes $A$ with a positive probability, which mirrors the trembling-hand scenario. Consequently, the $(B,B)$ equilibrium unravels for sufficiently high $p$, $q$, and $\delta$, because both players have a strict incentive to plan and screen for the cooperative outcome. 
\end{proof}
\begin{rem} In the baseline model, the planning-and-scouting protocol operates primarily off the equilibrium path, functioning as a guardrail that helps players to coordinate on the payoff-dominant outcome. Screening would become active on the equilibrium path once we introduce trembles into the $(A,A)$ equilibrium, such that players occasionally plan and execute $B$ e.g, with probability $\xi > 0$. Because unconditionally executing $B$ with probability $\xi$ is akin to executing $A$ with probability $\tilde{\varepsilon} := 1-\xi$, the payoff-dominant $(A,A)$ equilibrium persists, by the arguments discussed above, provided screening is sufficiently cheap and precise. \end{rem}

\begin{rem}
The analysis so far concerns environments with a newly drawn adversary in each iteration. Thus, an off-path revision cannot be interpreted as reliable information about a player's future behavior. This will change in Section \ref{Section Unilateral Execution}, where payoff perturbations are permanent and observed behavior reveal information about a player's type and future actions.
\end{rem} 

\section{Diagrammatic Interpretation}\label{Section Basins}

Figure~\ref{fig:vanishing_basin_space} schematically summarizes Proposition \ref{Prop Instability time pref} and Corollary \ref{Coro Complete Strategy}: As the scouting precision increases ($p, q \to 1$) and revisions become cheap ($\delta \to 1$), the low-payoff equilibrium $(B,B)$ becomes unstable. Put differently, by selectively executing only those coordination attempts that yield the payoff-dominant outcome $a$, players' screening efforts compress the mixed-strategy threshold towards the pure $B,B$ equilibrium, leaving $(A,A)$ as the only stable equilibrium.

\begin{figure}[htbp]
\centering
\begin{tikzpicture}[scale=0.85, every node/.style={font=\small}]
    \colorlet{basinB}{red!15}
    \colorlet{basinA}{blue!15}
    \colorlet{trans}{gray!10}

    \fill[basinB] (0,0) rectangle (1.5,1.5);
    \fill[basinA] (1.5,1.5) rectangle (10,10);
    
    \fill[trans] (0,1.5) rectangle (1.5,10);
    \fill[trans] (1.5,0) rectangle (10,1.5);

    \draw[thick, ->] (0,0) -- (11,0) node[right] {$\varepsilon_1$};
    \draw[thick, ->] (0,0) -- (0,11) node[above] {$\varepsilon_2$};
    \draw[thick] (10,0) -- (10,10) -- (0,10);
    
    \draw[thick] (10,0) -- (10,-0.2) node[below] {$1$};
    \draw[thick] (0,10) -- (-0.2,10) node[left] {$1$};
    \draw[thick] (1.5,0) -- (1.5,-0.2) node[below] {$\varepsilon^*_1$};
    \draw[thick] (0,1.5) -- (-0.2,1.5) node[left] {$\varepsilon^*_2$};

    \draw[dashed, thick, darkgray] (1.5,0) -- (1.5,10);
    \draw[dashed, thick, darkgray] (0,1.5) -- (10,1.5);

    \filldraw[black] (0,0) circle (3pt) node[below left] {$(B,B)$};
    \filldraw[black] (10,10) circle (3pt) node[above right] {$(A,A)$};
    \filldraw[black] (1.5,1.5) circle (3pt);

    \node[red!80!black, align=center, font=\footnotesize] at (0.75, 0.75) {Vanishing Basin};
    
    \node[blue!80!black, font=\large] at (6,7) {Off Path Screening for $(A,A)$};

    \draw[->, thick, black] (1.5, 1.5) -- (0.4, 0.4) node[midway, sloped, above, font=\scriptsize] {$p, q, \delta \to 1$};

\end{tikzpicture}
\caption{As the precision of the scouting signals $p, q$ and the discount factor $\delta$ approach $1$, the mixed equilibrium $(\varepsilon^*_1,\varepsilon^*_2)$ is squeezed toward the origin. The red area represents the vanishing basin of attraction for the safe equilibrium $(B,B)$, leaving the blue region as the dominant basin for the cooperative $(A,A)$ outcome.}
\label{fig:vanishing_basin_space}
\end{figure}

\section{Forced Executions}\label{Section Unilateral Execution}

This section revisits our previous selection argument under the alternative assumption that players can unilaterally force the execution of plans even when their opponents seek a revision. 

\begin{definition}[Strategy with Forced Execution and History]\label{Def Unilateral Strategies}
Let $\mathcal{H}$ denote the set of all possible histories of observed signals and own preparations from prior revised rounds, with $h^t \in \mathcal{H}$ denoting the history prior to iteration $t$. A strategy for Player $i \in \{1,2\}$ under the unilateral-execution protocol is a pair $s_i = (\alpha_i, \chi_i)$, where:
\begin{enumerate}
    \item $\alpha_i: \mathcal{H} \to [0,1]$ specifies the probability of preparing action $A_i$ in Stage 1 as a function of history $h^t$.
    \item $\chi_i : \mathcal{H} \times \{A_i, B_i\} \times \{\hat{A}_i, \hat{B}_i\} \to \{0,1,2\}$ specifies, given preparation $x_i \in \{A_i, B_i\}$ and observed signal $\hat{y}_i \in \{\hat{A}_i, \hat{B}_i\}$, one of three Stage-3 choices:
    \begin{itemize}
        \item $\chi_i(x_i,\hat{y}_i) = 0$ \textbf{(Revise):} Player $i$ requests a return to Stage 1.
        \item $\chi_i(x_i,\hat{y}_i) = 1$ \textbf{(Execute):} Player $i$ is willing to execute the current plans $(x_1,x_2)$.
        \item $\chi_i(x_i,\hat{y}_i) = 2$ \textbf{(Force):} Player $i$ unilaterally forces execution.
    \end{itemize}
\end{enumerate}
\end{definition}

\textit{Aggregation of Choices:} The Stage-3 outcome of the game is thus determined by the aggregation of choices. The current plans $(x_1, x_2)$ are executed if either $\max\{\chi_1, \chi_2\} = 2$ or $\min\{\chi_1, \chi_2\} = 1$. Otherwise, the game returns to Stage 1.

\begin{prop}[Forced Executions]\label{Prop Unilateral} 
Suppose for each player $i \in \{1,2\}$ there is an independent prior probability $\varepsilon \in (0,1)$ that payoffs change such that $a > c' >  b$. Suppose also that payoffs do not change while players revise their plans and that players cannot communicate that their payoffs have changed. If scouting precision is high ($p, q \to 1$) and revision frictions are low ($\delta > \max\{\frac{d}{a}, \frac{c'}{a}\}$), the safe equilibrium $(B,B)$ unravels, while the payoff-dominant equilibrium $(A,A)$ survives.
\end{prop}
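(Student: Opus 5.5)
The argument has two parts: first show that $(A,A)$ survives forced executions, then show that $(B,B)$ unravels. Payoff types are now permanent for the duration of play: a perturbed ("stag") type has payoffs $a>d$ and $c'>b$, so $A$ is strictly dominant for it. Observed histories of signals are therefore informative about the opponent's type, and beliefs will be updated along histories $h^t$ as in Definition~\ref{Def Unilateral Strategies}.

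\emph{Survival of $(A,A)$.} Let both types of both players plan $A$ and choose Execute ($\chi_i=1$) after every history and signal. On path the signals carry no information and every player receives $a$ immediately. I check deviations in three steps.
\begin{itemize}
\item \emph{Deviation at Stage 1 to $B$.} If the deviator then executes, he obtains $d<a$. If he then revises or forces, he obtains at most $\max\{d,\delta a\}<a$.
\item \emph{Deviation at Stage 3 to Revise.} Both players then restart, so the deviator gets at most $\delta a<a$.
\item \emph{Deviation at Stage 3 to Force.} This executes $(A,A)$ anyway and changes nothing.
\end{itemize}
The perturbed type also prefers $A$, since $a$ is its best payoff. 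Beliefs after an off-path revision can be set passively, with the opponent still believed to plan $A$. Hence $(A,A)$ is a PBE for every $p,q,\delta$.

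\emph{Unravelling of $(B,B)$.} Suppose the normal types play $(B,B)$, planning $B$ and executing. I will argue this cannot be an equilibrium once perturbed types exist.
\begin{enumerate}
\item A perturbed type $i$ plans $A_i$ in every round, because $A$ is dominant for it.
\item Consider the perturbed type's Stage-3 choice when its opponent's signal reads $\hat B_i$ and it faces a normal opponent. Forcing yields $c'$. Screening by revising until the opponent's signal reads $\hat A$ yields a payoff close to $\delta\cdot(\text{value of eventually meeting }A)$. Both are above $b$, so the perturbed type never rests at $B$.
\item A normal type $-i$ therefore faces, with prior probability $\varepsilon$, an opponent who plans $A$ forever. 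I take Player $-i$'s deviation to be the screening strategy of Proposition~\ref{Prop Instability time pref}: plan $A_{-i}$ and execute on $\hat A_{-i}$. Because types are now persistent, I need the value of this strategy under learning. As $p,q\to1$, a single round essentially reveals the opponent's plan.
\begin{itemize}
\item Against a perturbed opponent (probability $\varepsilon$) the deviator obtains $a$ at once.
\item Against a normal opponent, the history reveals $B$. The deviator can then switch back to $B$ and obtain $\delta b$, unless the opponent forces execution of $(B,A)$ first, in which case he gets $c$ in that single round.
\end{itemize}
The value of the deviation is therefore approximately $\varepsilon a+(1-\varepsilon)\delta b$, which must exceed $b$. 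This needs $\delta$ close to $1$ relative to $\varepsilon$, and I will argue that the forced-execution risk is the binding issue (step 4).
\item I must also show that the perturbed type's choice at its $\hat A$ signal, and the normal screener's, are to Execute and not to Force. This is where $\delta>d/a$ and $\delta>c'/a$ enter. Once a signal reveals a mismatch $(B,A)$, a player holding $B$ could force and get $d$, or a perturbed $A$-holder could force and get $c'$. Both options are dominated by revising and eventually securing $a$ (worth roughly $\delta a$) once the opponent has learned that this partner is an $A$-type and switched to $A$. This follows because the switch to $A$ is a best response for the normal type, who then gets $\delta a>b$. So with $\delta a>\max\{d,c'\}$ neither type forces at a mismatch. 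The screening of Proposition~\ref{Prop Instability time pref} therefore goes through, and a normal type who detects an $A$-planning opponent converges to $A$. The $(B,B)$ profile is thus destroyed, because with probability $\varepsilon$ a detected perturbed partner induces the move to $A$. Sequential rationality of this switch, given Bayesian beliefs that tend to $1$ as $p\to1$, then delivers the unravelling.
\end{enumerate}

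\emph{Main obstacle.} The delicate step is step 4: ruling out the forced execution of mismatches. The argument must handle the $B$-holder, who is tempted by $d$, and the $c'$-type, who is tempted by $c'$, and show that after a revelatory signal the continuation value of revising is approximately $\delta a$. I would construct beliefs explicitly along histories in the limit $p,q\to1$, using the fact that payoffs are fixed across revisions, and then invoke continuity in $p,q$, as in Proposition~\ref{Prop Instability time pref}, to pass from the limit to sufficiently high precision.
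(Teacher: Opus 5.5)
Your treatment of $(A,A)$ is fine. With unconditional execution, a deviator's payoff is bounded by $\max\{d,\delta a\}<a$, so that part does not even need $\delta>d/a$. Your handling of forcing at mismatches in step 4 also matches the paper. The gap is in the unraveling step.

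You value the normal type's deviation at $\varepsilon a+(1-\varepsilon)\delta b$, assuming that against a normal opponent the deviator must revert to $B$. For this to exceed $b$ you need $\delta>\frac{b-\varepsilon a}{(1-\varepsilon)b}$. That bound tends to $1$ as $\varepsilon\to 0$ and is not implied by the hypothesis $\delta>\max\{d/a,c'/a\}$. For example, $a=10$, $d=c'=3$, $b=2$, $c=1$, $\delta=0.5$, $\varepsilon=0.01$ satisfies the hypothesis, yet your deviation is worth $0.1+0.99=1.09<b$. Your closing claim that $(B,B)$ is destroyed because a detected perturbed partner induces the move to $A$ does not close this gap. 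A separating profile in which normal types switch to $A$ only after detecting an $A$-planner still has normal pairs coordinating on $B$ on path. Unraveling therefore has to come from a profitable deviation by a normal type, and your computed value does not deliver one under the stated hypothesis.

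The missing idea is mimicry. In the candidate profile only perturbed types plan $A$. So when a normal Player 1 deviates to $A_1$, the normal Player 2 observes $\hat{A}_2$ and, by Bayes' rule, assigns probability $q\varepsilon/(q\varepsilon+(1-q)(1-\varepsilon))\to 1$ to facing a perturbed type who will plan $A$ in every round. In step 4 you argue that the $B$-holder revises rather than forcing for $d$ because he expects to switch to $A$ and earn $\delta a$. That same reasoning applies when the $A$-planner is a disguised normal type.

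Hence the normal opponent revises and plans $A_2$ next round. Against a normal opponent the deviator earns $\delta a$, not $\delta b$, and the deviation is worth $\varepsilon a+(1-\varepsilon)\delta a>\delta a>d>b$. This is exactly where $\delta>d/a$ makes the deviation profitable. It is also why the proposition needs only this threshold rather than the $\delta^*(\varepsilon)$ of Proposition~\ref{Prop Instability time pref}. Your steps 3 and 4 are inconsistent on this point: step 4 has a normal type switching to $A$ on detecting an $A$-planner, while step 3 has him staying at $B$ against the deviator.
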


\begin{proof} 
The proof proceeds in three steps. First, we characterize the $(B,B)$ equilibrium candidate and show that players do not force executions on the equilibrium path. Second, given the belief system of this candidate equilibrium, we demonstrate that normal players have a deviation strategy that is strictly ex-ante and ex-interim profitable and causes the $B,B$ equilibrium to unravel. Third, we show that the $A,A$ equilibrium survives. 

The key to the unraveling result is that a normal type, who deviates to planning $A_1$ mimics the equilibrium behavior of the perturbed type. Consequently, upon observing $\hat{A}_2$, Player 2's equilibrium posterior indicates that Player 1 must be perturbed, and Player 2 therefore responds according to the perturbed-type continuation prescribed by the candidate equilibrium and plans $A_2$ in the next iteration. 

Finally, we note that in the limit where $p,q\rightarrow1$, the probability that the unraveling dynamic takes only one revision goes to one. We thus suppress the time/iteration index in the proof below whenever no misunderstanding is expected. 

\textbf{Step 1: The $(B,B)$ Equilibrium Candidate} 
Suppose players play a candidate separating equilibrium where the normal type plans $B_i$ ($\alpha_i = 0$) in iteration $t=0$ and the perturbed type (occurring with prior probability $\varepsilon$) plans $A_i$ ($\alpha_i = 1$). To ensure sequential rationality at the interim stage, both types execute only upon observing a matching signal and request a revision ($\chi_i = 0$) if a mismatch is detected. The candidate strategy profile is:
\begin{itemize}
    \item \textbf{Normal Type:} Plans $B_i$, with $\chi_i(B_i, \hat{B}_i) = 1$ and $\chi_i(B_i, \hat{A}_i) = 0$.
    \item \textbf{Perturbed Type:} Plans $A_i$, with $\chi_i(A_i, \hat{A}_i) = 1$ and $\chi_i(A_i, \hat{B}_i) = 0$.
\end{itemize}

Because only the perturbed type prepares $A_1$ on the $(B,B)$ equilibrium path, Player 2's posterior belief, upon observing $\hat{A}_2$, that Player 1 is the perturbed type is:
\begin{eqnarray}
    \mu_2(\text{Perturbed}_1 \mid \hat{A}_2) = \frac{q \varepsilon}{q \varepsilon + (1-q)(1-\varepsilon)}, \quad \lim_{q\rightarrow1}\mu_2(\text{Perturbed}_1 \mid \hat{A}_2)=1 \label{Proof Prop 4 eq1}
\end{eqnarray}
If screening is arbitrarily precise, equation (\ref{Proof Prop 4 eq1}) indicates that observing $\hat{A}_2$ informs Player 2 that Player 1 is perturbed and will again plan $A_1$ if the game is revised. Symmetrically, the perturbed Player 1's belief that Player 2 prepared $B_2$ is:
\begin{eqnarray}
    \mu_{1,\text{pert}}(B_2 \mid \hat{B}_1) = \frac{p(1-\varepsilon)}{p(1-\varepsilon) + (1-p)\varepsilon}, \quad \lim_{p\rightarrow1}\mu_{1,\text{pert}}(B_2 \mid \hat{B}_1)=1 \label{Proof Prop 4 eq2}
\end{eqnarray}

Using these beliefs, we verify that the unilateral forced execution option ($\chi = 2$) is never triggered off path. As $p, q \to 1$, upon observing $\hat{A}_2$, Player 2 evaluates two Stage-3 choices:
\begin{itemize}
    \item \textit{Force Execution ($\chi_2 = 2$):} Unilaterally forces the profile $(A_1, B_2)$, yielding an immediate expected payoff of $d$.
    \item \textit{Revise ($\chi_2 = 0$):} Allows a return to Stage 1. Anticipating Player 1 will persistently prepare $A_1$, Player 2 optimally switches to $A_2$ in the next iteration $t=1$, securing mutual coordination for a discounted continuation value of $\delta a$.
\end{itemize}
Because $\delta > \frac{d}{a} \implies \delta a > d$, Player 2 strictly prefers to Revise ($\chi_2 = 0$) upon seeing $\hat{A}_2$ in order to play $\alpha_2=1$ in the game's next iteration.

Similarly, upon observing $\hat{B}_1$, the perturbed Player 1 strictly prefers to Revise ($\chi_1 = 0$) for a continuation payoff of $\delta a$ rather than force execution for $c'$, because $\delta > \frac{c'}{a} \implies \delta a > c'$. Since both strictly prefer to revise, we have $\max\{\chi_1, \chi_2\} < 2$ such that the unilateral forcing rule is not triggered.

\textbf{Step 2: Off-Path Deviation and Unraveling} 
Given the candidate equilibrium and the fact that executions are not forced, a normal player (e.g., Player 1) can profitably deviate by always planning $A_1$ and setting $\chi_1=1$ upon observing $\hat{A}_1$ and $\chi_1=0$ otherwise. When scouting is arbitrarily precise, preparing $A_1$ induces Player 2 to observe $\hat{A}_2$, instilling in Player 2 the ex-interim belief (\ref{Proof Prop 4 eq1}) that Player 1 is the perturbed type. 

Since, upon observing $\hat{A}_2$, Player 2 strictly prefers to revise rather than force execution, Player 1's revision request upon observing $\hat{B}_1$ succeeds. In the second iteration, $t=1$, Player 2, believing that he faces a perturbed type who will prepare $A_1$ regardless, optimally plans $A_2$. In the limit where $p,q\rightarrow 1$, both players then observe matching signals ($\hat{A}_1,\hat{A}_2$) and execute. This deviation thus yields an ex-ante expected payoff at $t=0$ of $V_1^{dev}=\varepsilon a+\delta(1-\varepsilon)a$ for the normal Player 1. Because $\varepsilon a+\delta(1-\varepsilon)a > \delta a > d > b > c$, this expected discounted payoff strictly dominates the safe equilibrium payoff $b$. Thus, the $(B,B)$ equilibrium unravels.

\textbf{Step 3: The Payoff Dominant Equilibrium Survives} In the candidate $(A,A)$ equilibrium, both normal and perturbed types prepare $A_i$ with probability 1. A matching signal leads both players to execute, yielding $a$ for both players, which strictly dominates revising and obtaining at most $\delta a < a$. If a normal player unilaterally deviates by preparing $B_i$, a mismatch occurs. At this mismatch information set, the deviator can force execution to obtain $d$, while the non-deviating normal player could force execution to obtain $c$. However, a revision leads back to $(A,A)$ in the next iteration, yielding a continuation value of $\delta a$. Since $\delta > \frac{d}{a}$, we have $\delta a > d > c$. Therefore, neither the deviator nor the non-deviator prefers forcing to revising. Similarly, a perturbed player prefers revision to forcing because $\delta a > c'$ by assumption. Since no player has an incentive to force execution or permanently deviate, the payoff-dominant equilibrium $(A,A)$ survives.
\end{proof}

\begin{rem}\label{Rem Tremble vs Perturbation} By continuity Proposition \ref{Prop Unilateral} also holds for scouting precisions $p(\varepsilon),q(\varepsilon)$ sufficiently close to 1 rather than just in the limit $p,q\rightarrow1$. More importantly, Proposition \ref{Prop Unilateral} requires permanent payoff perturbations. The ``trembling hands'' or transitory shocks in Proposition \ref{Prop Instability time pref} and Corollary \ref{Coro Payoff Perturbations} no longer suffice. Once unilateral executions are available, players facing a trembling hand opponent may (and will, if $\varepsilon$ is small) choose to force executions upon observing a mismatch. Such forced executions stifle the learning and screening mechanism upon which Proposition \ref{Prop Unilateral} relies.
\end{rem}

\section{Discussion}\label{Discussion}

The present paper stresses that (i) every action requires preparation and that (ii) these preparations are being monitored such that (iii) players partially know each other's contemporaneous actions once they decide to execute their plans. This modeling approach is not only plausible in many contexts but also avoids the ``chicken-and-egg" problems between signals and actions that arise in standard simultaneous-move games.

Strategically, monitoring preparations creates a strong incentive for players to actively screen/scout for cooperative behavior. In turn, once the monitoring precision is high and the time cost of revisions is low, the safe but payoff-inferior equilibrium $(B,B)$ becomes unstable. Exogenous payoff uncertainty or small trembles thus cause the inferior equilibrium to unravel, isolating the payoff-dominant outcome $(A,A)$ as the model prediction.


\newpage

\addcontentsline{toc}{section}{References}
\markboth{References}{References}
\bibliographystyle{apalike}
\bibliography{References}

\end{document}